\newtheorem{proposition}{Proposition}
\newtheorem*{mainresult*}{Main result}
\DeclareMathOperator{\tr}{tr}
\DeclareMathOperator{\im}{im}
\DeclareMathOperator{\Prob}{Prob}
\newcommand{\abs}[1]{{\lvert{#1}\rvert}}
\newcommand{\norm}[1]{{\lVert{#1}\rVert}}
\newcommand{\reals}{{\mathbb R}}
\newcommand{\id}{{\openone}}
\newcommand{\econe}[1]{{C^+_{#1}}}
\newcommand{\eulere}{{\mathrm e}}
\begin{document}
\title{Typical local measurements in generalised probabilistic theories: emergence of quantum bipartite correlations}

\author{Matthias Kleinmann}
\email{matthias.kleinmann@uni-siegen.de}
\affiliation{%
Naturwissenschaftlich-Technische Fakult\"at,
Universit\"at Siegen,
Walter-Flex-Stra{\ss}e 3,
57068 Siegen, Germany}

\author{Tobias J.\ Osborne}
\email{tobias.osborne@itp.uni-hannover.de}
\affiliation{%
Institut f\"ur Theoretische Physik,
Leibniz Universit\"at Hannover,
Appelstra{\ss}e 2,
30167 Hannover, Germany}

\author{Volkher B.\ Scholz}
\email{scholz@phys.ethz.ch}
\affiliation{%
Institut f\"ur Theoretische Physik,
ETH Zurich,
Wolfgang-Pauli-Strasse 27,
8093 Zurich, Switzerland}
\affiliation{%
Institut f\"ur Theoretische Physik,
Leibniz Universit\"at Hannover,
Appelstra{\ss}e 2,
30167 Hannover, Germany}

\author{Albert H.\ Werner}
\email{albert.werner@itp.uni-hannover.de}
\affiliation{%
Institut f\"ur Theoretische Physik,
Leibniz Universit\"at Hannover,
Appelstra{\ss}e 2,
30167 Hannover, Germany}

\date{15 May 2012}

\begin{abstract}
	What singles out quantum mechanics as the fundamental theory of Nature? Here we study local measurements in generalised probabilistic theories (GPTs) and investigate how observational limitations affect the production of correlations. We find that if only a subset of \emph{typical} local measurements can be made then all the bipartite correlations produced in a GPT can be simulated to a high degree of accuracy by quantum mechanics. Our result makes use of a generalisation of Dvoretzky's theorem for GPTs. The tripartite correlations can go beyond those exhibited by quantum mechanics, however.
\end{abstract}

\pacs{%
03.65.Ta, 
03.65.Ud} 

\maketitle

\emph{Introduction.}---%
The continued success of quantum mechanics (QM) strongly implies that it is the fundamental description of Nature.
However, it could still be that QM is simply a very good effective theory which
breaks down if we are able to perform experiments with sufficiently high energy
and precision. In this case QM would need to be replaced by a more general
``post-quantum''  theory. In particular \emph{generalised probabilistic
theories} (GPTs)  \cite{Barnum:2007PRL, Barnum:2011TCS, Barrett:2007PRA,
Masanes:2011NJP, Chiribella:2010PRA} have received considerable attention
recently, both as a foil to better understand the features of QM, and as a
powerful abstract way to reason about correlations and locality. These
investigations have lead to many interesting results, including simplified and
improved cryptographic schemes and primitives \cite{NoSigQKD,hanggi2010}.

If Nature is actually described by a theory other than QM then the natural question arises: why is QM such a good effective theory? A natural answer, which we investigate here, is that experimental imperfections prevent us from observing any post-quantum phenomena.

Suppose that Nature is described by a GPT with a high-dimensional state space  and corresponding high-dimensional set of all possible measurements. Observational limitations, such as detector resolution, mean that it is impossible to access most of these theoretically possible measurements. If physically implementable measurements are those chosen from some \emph{typical} subset (a precise definition is given in the sequel) then we show that the \emph{bipartite} correlations arising in any experiment can be modelled, to a high degree of precision, by those of QM. Note that the tripartite and multipartite correlations could go beyond those exhibited by QM: a sufficiently refined experiment involving three or more particles could exhibit behavior going beyond that possible within QM.

It is interesting to contrast our setting with that of \emph{decoherence}, which models the passage from the microscopic to the macroscopic \emph{classical} world \cite{Navascues:2010RSA, Kurzynski:2011XXX}. The crucial difference here is that decoherence arises from the correlations developed between a given particle and many other inaccessible particles (in the GPT framework it is rather likely that decoherence will always leads to an effective classical theory). By way of contrast, we consider only a few particles in isolation: roughly speaking, we study the case where only the ``local dimensions'' are effectively truncated.

Our argument builds on several important prior ideas. The first arises from the search \cite{ludwigbook, Mittelstaedt, Hardy:2001XXX,alfsen03} for an
 axiomatic derivation of QM: it was realised that a reasonable physical theory should
 allow for the convex combination of different possible measurements, and hence the underlying sets of both states and measurements should be \emph{dual} convex
 bodies. These developments have lead to the identification of generalised probabilistic theories as a general framework to study theories of physics going beyond QM.

The second cornerstone of our argument is the \emph{concentration of measure phenomenon} \cite{Milman:1986,
 Talagrand:1995IHE} epitomized by Dvoretzky's theorem which states, roughly,
 that a random low-dimensional section of a high-dimensional convex body looks
 approximately spherical.
This powerful result has already found myriad applications in quantum
 information theory, e.g., in quantum Shannon theory \cite{Hayden:2006CMP,
 Aubrun:2011CMP}, and quantum computational complexity theory
 \cite{Bremner:2009PRL, Gross:2009PRL}.
Here we adapt the ``tangible'' version of Dvoretzky's theorem for our purposes.

The final idea we exploit is the observation that \emph{spherical} state spaces can be simulated by \emph{sections} of quantum mechanical state spaces
 \cite{Tsirelson:1985ZNS}. As will become evident, our approach owes much to the recent work \cite{Barnum:2010PRL, Acin:2010PRL} showing that bipartite correlations may be modelled by QM when the constituents locally obey QM.

Here we exploit these three core ideas to obtain our
\begin{mainresult*}
If the local measurements in a GPT are chosen from a typical section of the convex body of all possible measurements then, with a high degree of accuracy, they do not yield any post-quantum prediction for the bipartite scenario.
\end{mainresult*}
More specifically, we require that the physically implementable measurements 
are in essence given by the section of the convex body of all measurements with 
a low-dimensional $\textsl{O}(n)$-\emph{typical} subspace. This means that the 
accessible measurements span a subspace and the choice of this subspace is not 
particular among all other subspaces of the same dimension.
This is a core assumption in our argument.  Although we restrict our attention 
 here to the case of a $\textsl{O}(n)$-typical subspaces, it is likely that our 
 result extends to a much wider variety of typicality notions.

Our argument
then implies that for most measurements given by low-dimensional subspaces
the outcomes can be explained using quantum mechanics. Hence we argue that 
those measurement devices revealing any post-quantum behavior are extremely 
difficult to build---since the choice of the right subspace requires extreme 
fine tuning.

\emph{Probabilistic physical theories, ordered vector spaces.}---%
It is useful to formulate GPTs in the mathematical language of \emph{ordered vector spaces} \cite{Alfsen:1971, Paulsen:2002, Barnum:2011TCS}: we begin with the description of the single-party state space and local measurements. The system is always assumed to be in a \emph{state} $\omega$, which encodes the probabilities of each outcome of all the possible measurements that may be performed. The set of all possible states, \emph{state space}, is denoted $\Omega$. Since any \emph{probabilistic} combination of states is, in principle, preparable, $\Omega$ is a convex set. We always assume that $\Omega$ is represented as a subset of $\mathbb{R}^n$.

A state $\omega\in \Omega$ assigns a probability to each \emph{outcome} of any 
possible measurement; a measurement outcome is represented by a map $f:\Omega 
\rightarrow [0,1]$. This map respects probabilistic mixtures of states, meaning 
that $f(p \omega_1 +(1-p)\omega_2) = p f(\omega_1)+(1-p)f(\omega_2)$. Extending 
each map linearly allows us to conclude that measurement outcomes are elements 
of the \emph{dual space} $V$ to $\mathbb{R}^n$. Any such $f$ is called an 
\emph{effect}. A special effect is the \emph{unit effect} $e$ defined by 
$e(\omega) = 1$ for all $\omega \in \Omega$. The unit effect represents a 
measurement with a single outcome: this is certain to occur regardless of what 
the state is. Convex combinations of effects are themselves assumed to be legal 
effects, so the set of effects is a convex subset of the \emph{dual} vector 
space $V$. A \emph{measurement} with $M$ outcomes is then a set of effects 
$\{f_j\}_{j=1}^M$ summing to the unit effect $e = \sum_{j=1}^M f_j$. This 
ensures that outcome probabilities of measurements sum to one. It is convenient 
to introduce the \emph{cone} generated by the zero effect, the unit effect, and 
all other effects, i.e., the set $V^{+} \equiv \{tf \, |\, t\ge 0, \text{$f$ is 
an effect}\}$.

The triple $(V, V^{+}, e)$ is known as an \emph{ordered unit vector space} and
encodes all of the theoretically possible local effects of a GPT. Throughout
the following we regard $(V, V^{+}, e)$ as the fundamental defining
representation of a GPT with state space as a derived concept (i.e., $\Omega$
is henceforth \emph{defined} as the set of all positive linear functionals
$\omega$ on $V$ such that $e(\omega) = 1$). It is convenient to assume a
further property, namely, that the triple $(V, V^{+},e)$ is \emph{Archimedean}.
This means that if $te + f \in V^{+}$ for all $t> 0$, then $f \in V^{+}$. Such
\emph{Archimedean ordered unit vector} spaces are referred to as \emph{AOU
spaces} in the sequel. The Archimedean axiom is a kind of closure assumption
which allows us, for example, to construct the order norm $\|f\|_+ \equiv \inf
\{t\,|\,te \pm f \in V^{+}, t\ge 0\}$. All ordered vector spaces can be
\emph{Archimedeanised} \cite{paulsen:2009a}, and from now on we assume that the
effects of a GPT are suitably represented by an AOU space.

An important example of a GPT is that of \emph{quantum mechanics} itself: an
$n$-level quantum system is described by an AOU space where $V\subset
M_{n}(\mathbb{C})$ is the set of  $n\!\times\! n$ hermitian matrices. The
effects are then the matrices $F\in V$ with $0\le F\le \id$ and the unit is
$e\equiv\id$. The cone $V^{+}$ generated by these effects is hence given by the
positive semidefinite matrices. One can verify that the triple $(V, V^{+}, e)$
is Archimedean. State space $\Omega$ is given by $\{F\mapsto \tr(\rho F)\, |\,
\rho\in V^{+}, \tr(\rho)=1\}$ and the order norm $\norm{A}_+$ is given by the
largest singular value of $A$.

\emph{Sections of GPTs.}---%
Here we study the effective theories arising from GPTs when only a subset of the possible effects may be implemented. For this purpose it is useful to introduce the notion of a linear map between AOU spaces: we say that a linear map $\varphi\colon V\rightarrow W$ between two AOU spaces $(V, V^+, e_V)$ and $(W, W^+, e_W)$ is \emph{positive} if $\varphi(V^+)\subset W^+$ and $\varphi$ is \emph{unital} when $\varphi(e_V)=e_W$.

Our definition of a \emph{section} of a GPT/AOU space $W$ is then motivated by
the observation that if we can only implement some subset of the effects in
$W^{+}$ then we can implement any convex combination of them. A particular
example of such a restriction is the \emph{intersection} of $W^{+}$ with some
subspace $V\subset W$. Since we can always apply the ``do nothing''
measurement, we require the subspace $V$ to contain $e_W$. Abstractly, a
section of $(W, W^{+}, e_W)$ is defined to be a positive unital injection
$\phi:V\hookrightarrow W$ such that $\phi(V^{+}) = W^{+}\cap \im\phi$. This 
last
condition has the consequence that the left inverse $\phi^{-1}$ is also a
positive unital linear map.

When restricted to a section of a GPT $(W, W^{+}, e_W)$ the state space of the
section $(V, V^{+}, e_V)$ is given by a \emph{quotient} of the state space of
$W$, i.e., $\Omega_V = \Omega_W/\sim$, where the equivalence relation is
determined by $\omega \sim \sigma$ if $f(\omega)=f(\sigma)$ for all $f\in V$.
This quotient is the \emph{shadow} of the convex body $\Omega_W$ on the
subspace $V$.

We now describe the AOU space playing the central role in our argument. This space is given by triple $(\mathbb{R}^{n+1},  \econe{n+1}(c), (1, \vec 0))$ where  $\econe{n+1}(c)$ denotes the $(n+1)$-dimensional Euclidean cone with length-diameter ratio $c:2$, i.e.,
\begin{equation}
 \econe{n+1}(c)= \{ (t, \vec x)\in \reals_+\times \reals^n \mid
 t\ge c \norm{\vec x}_2\},
\end{equation}
of which $e = (1, \vec 0)$ is the order unit.

It is a nontrivial fact that this space can be embedded into a quantum system, 
 i.e., it is a section of QM. The argument is due to Tsirelson 
 \cite{Tsirelson:1985ZNS} and proceeds as follows. Let $m=n/2$ if $n$ is even 
 and $m=(n+1)/2$ for odd $n$ and define $\gamma_1,
 \dotsc, \gamma_{2m}\in M_{2^m}(\mathbb{C})$ via $\gamma_{2j-1}= \sigma^{(1)}_z\dotsm
 \sigma^{(j-1)}_z \sigma_x^{(j)}$ and $\gamma_{2j  }= \sigma^{(1)}_z\dotsm
 \sigma^{(j-1)}_z \sigma_y^{(j)}$, where we've employed the standard Pauli matrix
 notation and juxtaposition indicates an implicit tensor product. Consider the 
positive unital injection
\begin{equation}
 \varphi\colon (t, \vec x)\mapsto t\id + c \sum_j x_j \gamma_j,
\end{equation}
(The positivity follows from $2 t\, \varphi(t, \vec x)=\varphi(t, \vec x)^2+
 (t^2- c^2\norm{x}_2^2)\, \id\ge 0$, arising from
 $\gamma_j\gamma_k + \gamma_k \gamma_j = 2\delta_{jk}\id$).
Since $\varphi$ is an injection, it has a left-inverse
\begin{equation}
 \varphi'\colon A\mapsto (\tr A, \tr(A \gamma_i)/c))/2^m,
\end{equation}
 which is again positive.
(Let $x_i \equiv \tr(A \gamma_i)$, so that $\tr (A) -c \norm{\vec x/c}_2 = \tr[A
 \varphi(1,-(\vec x/\norm{\vec x}_2)/c)]\ge 0$, since both matrices in the
 trace are already positive.)

\emph{Multipartite systems.}---%
We now discuss how to form joint systems in the GPT framework. Suppose Alice
and Bob are each in possession of a GPT $(V_A, V_A^{+}, e_A)$ and $(V_B,
V_B^{+}, e_B)$, respectively, which describes the purely \emph{local}
measurements for each party. The \emph{joint GPT} is then \emph{defined} to be
the AOU space $(V_A\otimes V_B, V_{AB}^+, e_A\otimes e_B)$ where, in order to
proceed, we must specify how to construct the cone $V_{AB}^+ \equiv
\text{``$(V_A\otimes V_B)^+$''}$. There are an infinite variety of
possibilities, however, we may restrict our attention to the following two
extremal definitions \cite{Han:2009XXX}. The first corresponds to the
\emph{maximal tensor product} $(V_A\otimes_{\text{max}} V_B)^+$ which is
defined to be the Archimedeanisation of the cone $\{ \sum_{j=1}^k f_j\otimes
g_j\,|\, f_j \in V^{+}_A,  g_j \in V^{+}_B, k \in \mathbb{N} \}$ and the second
to the \emph{minimal tensor product} $(V_A\otimes_{\text{min}} V_B)^+ \equiv \{
u\in V_A\otimes V_B\,|\, (\omega_A\otimes \omega_B)(u) \ge 0, \text{for all
$\omega_A\in \Omega_A$ and $\omega_B\in \Omega_B$} \}$.

By way of contrast, the tensor product used in the formation of joint systems in quantum mechanics is neither the minimal nor maximal one, but is rather strictly in between: $(V_A\otimes_{\text{max}} V_B)^+ \subset (V_A\otimes_{\text{QM}} V_B)^+ \subset (V_A\otimes_{\text{min}} V_B)^+$. The quantum mechanical tensor cone $V_{AB}^+$ is given by the set of positive semidefinite operators in $M_{n_A}(\mathbb{C})\otimes M_{n_B}(\mathbb{C})$. The state space $\Omega_{AB}^{\text{min}}$ corresponding to $(V_A\otimes_{\text{min}} V_B)^+$ is precisely the set of \emph{separable states} and the state space $\Omega_{AB}^{\text{max}}$ corresponding to $(V_A\otimes_{\text{max}} V_B)^+$ is given by the set of all positive semidefinite operators $W$ with $\tr(W) = 1$ which satisfy $\tr(W A\otimes B) \ge 0$, $\forall A,B \ge 0$. This set is dual to the set of \emph{entanglement witnesses} \cite{Horodecki:1996PLA} and includes all legal density operators as well as some operators with negative eigenvalues. Even though the state space $\Omega_{AB}^{\text{max}}$ in the case where our local GPTs are QM is strictly larger than quantum mechanical state space, results of \cite{Barnum:2010PRL, Acin:2010PRL} show that it does not give rise to any bipartite correlations going beyond QM. The following proposition is a slight generalization of this statement, dealing with (local) sections of quantum systems.

\begin{proposition}\label{p14427}
Consider two AOU spaces $(V_A, V^\mathrm{+}_A, e_A)$ and $(V_B, V^\mathrm{+}_B,
e_B)$ which are sections of quantum systems with according positive unital
injections $\varphi_A$ and $\varphi_B$ into an $n_A$-level (respectively,
$n_B$-level) quantum system. Assume, without loss of generality, that
$n_{A}\le n_{B}$. Then for any positive unital bilinear map $\omega_{AB}\colon
V_A\times V_B\rightarrow \reals$ there exists a state $\sigma_{AB}$ of the
composite quantum system $AB$ and a positive unital
 automorphism $\psi$ on $B$ such that $\omega_{AB}(f,g) = \tr(\sigma_{AB}\
\varphi_{A}(f)\otimes (\psi\circ\varphi_{B})(g))$.
\end{proposition}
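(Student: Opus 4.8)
The plan is to reduce the statement to the case where both local systems are \emph{full} quantum systems, where it becomes the known fact that maximal–tensor–product states do not exceed quantum bipartite correlations, and then to package the required correction as a single local automorphism. First I would exploit that a section carries a positive unital left inverse: since $\varphi_A(V_A^+)=M_{n_A}^+\cap\im\varphi_A$, the left inverse $\varphi_A'$ is positive and unital, and likewise $\varphi_B'$. Pulling $\omega_{AB}$ back along them, $\tilde\omega(A,B)\equiv\omega_{AB}(\varphi_A'(A),\varphi_B'(B))$, yields a bilinear form on $M_{n_A}(\mathbb{C})\times M_{n_B}(\mathbb{C})$ which is positive on products of positive semidefinite matrices (because $\varphi_A',\varphi_B'$ are positive) and normalised, $\tilde\omega(\id,\id)=1$ (because they are unital and $\varphi_A'(\id)=e_A$). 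By duality there is a unique self-adjoint $W$ with $\tilde\omega(A,B)=\tr(W\,A\otimes B)$; positivity on products makes $W$ block positive, i.e.\ an element of $\Omega_{AB}^{\mathrm{max}}$, while $\varphi_A'\circ\varphi_A=\mathrm{id}$ gives $\tr(W\,\varphi_A(f)\otimes\varphi_B(g))=\omega_{AB}(f,g)$. It therefore suffices to represent this one block-positive $W$ by a genuine quantum state.

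Next I would recast the target identity. Writing $\psi^*$ for the trace-adjoint of $\psi$, the claim $\omega_{AB}(f,g)=\tr(\sigma_{AB}\,\varphi_A(f)\otimes(\psi\circ\varphi_B)(g))$ holds as soon as $(\mathrm{id}\otimes\psi^*)(\sigma_{AB})=W$ on the subspace $\im\varphi_A\otimes\im\varphi_B$; note that only this subspace identity is needed, which already leaves slack whenever the sections are proper. That block-positive bipartite correlations can be reproduced by a genuine state with locally transformed measurements is exactly the content of Barnum \emph{et al.} and Ac\'in \emph{et al.} \cite{Barnum:2010PRL,Acin:2010PRL}; the new elements are to show that the local transformation can be taken to be one \emph{positive unital automorphism} $\psi$ on $B$, and to accommodate the sections. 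Via the Jamio\l{}kowski isomorphism, self-adjoint operators on $A\otimes B$ correspond to Hermiticity-preserving maps $M_{n_A}\to M_{n_B}$, genuine states to completely positive maps, and block positivity of $W$ to mere positivity of its associated map $\Lambda$. In these terms $(\mathrm{id}\otimes\psi^*)(\sigma_{AB})=W$ reads as a factorisation $\Lambda=\psi^*\circ\Xi$ with $\Xi$ completely positive, the factor $\Xi$ supplying the state $\sigma_{AB}$.

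The construction of $\psi$ is the heart of the argument and the step I expect to be hardest. The idea is to fix a maximally entangled reference state $\sigma_{AB}$ supported on the $n_A$-dimensional diagonal of $A\otimes B$---which is precisely where the hypothesis $n_A\le n_B$ enters, guaranteeing the needed room on Bob's side---and to absorb the failure of complete positivity of $\Lambda$ into $\psi$. Concretely, $\psi^*$ must reproduce $\Lambda$ (up to the normalisation set by $\sigma_{AB}$) on the $n_A\times n_A$ corner of $M_{n_B}$ that $\sigma_{AB}$ sees, while its action on the remaining blocks is free and is used to enforce the three defining properties of $\psi$. The obstacle is exactly this extension: one must check that a map which is only positive (not completely positive) on the corner extends to a genuine positive unital automorphism of Bob's system, simultaneously satisfying $\psi(\id)=\id$ and invertibility, with the excess dimensions $n_B-n_A$ providing the freedom. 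This is where I would adapt the cited results, which establish that such an effect-preserving local transformation repairing block positivity to positivity always exists.

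Finally, once $\psi$ and the completely positive factor $\Xi$ (hence $\sigma_{AB}$) are in hand, unitality of $\psi$ together with $\tilde\omega(\id,\id)=1$ fixes $\tr(\sigma_{AB})=1$, so $\sigma_{AB}$ is a bona fide state; precomposing with the sections through $g\mapsto(\psi\circ\varphi_B)(g)$ and leaving Alice untouched then delivers the representation $\omega_{AB}(f,g)=\tr(\sigma_{AB}\,\varphi_A(f)\otimes(\psi\circ\varphi_B)(g))$ claimed in the proposition.
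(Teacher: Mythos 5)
Your proposal follows essentially the same route as the paper: pull $\omega_{AB}$ back along the positive unital left inverses to a positive unital bilinear form on the full quantum systems, and then invoke the result of Barnum et al.\ \cite{Barnum:2010PRL} for that case. Your additional Jamio\l{}kowski-isomorphism sketch of how the block-positive $W$ is repaired into a genuine state via a local positive unital automorphism merely expands the content of the cited result, which the paper's proof simply quotes, so the two arguments coincide in substance.
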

\begin{proof}
By assumption the map $\omega_{AB}' (M_A, M_B)\mapsto 
 \omega_{AB}(\varphi^{-1}_{A}(M_A), \varphi^{-1}_{B}(M_B))$ is positive and 
 unital on the quantum systems $A$, $B$.
Hence the statement reduces to the case where $\varphi_{A}$ and $\varphi_B$ are both
 the identity mapping. A proof for this case was given by Barnum et al.\ \cite{Barnum:2010PRL}.
\end{proof}

We stress that the existence of positive unital left inverse maps
$\varphi_A^{-1}$ and $\varphi_B^{-1}$ is essential for this result to hold.
Indeed, in the case of a hypothetical nonlocal box \cite{Popescu:1994Found}, it
is impossible to find positive unital maps into quantum such that there left
inverse is also positive and hence non-local boxes allow post-quantum
behavior. It is also important to note that Proposition~\ref{p14427} does not
generalize to more than two parties \cite{Acin:2010PRL}.

\emph{Typical sections, main result.}---%
Consider an arbitrary pair of $n$-dimensional GPTs $A$ and $B$ and suppose that
we are only able to access a \emph{typical section} of the set of local effects
for $A$ (respectively, $B$).  This is modelled by the intersection of $V^{+}_A$
(respectively, $V^{+}_B$) with a typical $k$-dimensional subspace,
$k\ll n$. To do this abstractly we choose a bijection $T$ between $V$ and
$\mathbb{R}^n$ and
consider a random linear injection $X\colon
\reals^k\hookrightarrow \reals^n$ such that the random variable $X(\vec x)$ is
distributed according to the uniform measure on the Euclidean $(n-1)$-sphere of
radius $\norm{\vec x}_2$. (That is, $X$ is an $\textsl{O}(n)$-random rotation
of an embedded fiducial $k$-dimensional subspace.) We call
\begin{equation}
 Q(t, \vec x)=te + TX(\vec x)
\end{equation}
a \emph{centered random section} of $\reals^{k+1}$ into $V$ and it ensures that
every subspace corresponding to a typical choice of measurement settings
contains the neutral effect $e$. Since only convex combinations of $e$ with $TX(\reals^k)$ are feasible, we now study the cone $V^+\cap Q(\reals_+, \reals^k)$.

The following result captures the \emph{concentration of measure} phenomenon
for our setting.
\begin{proposition}\label{p20790}
Let $(V, V^+, e)$ be an $n$-dimensional AOU space and $0<\varepsilon<1$.
Then for $k\le \mathscr O(\varepsilon^2\log n )$ there exists a $k+1$ 
dimensional centered random section $Q$ of $V$, such that, with high 
probability,
\begin{equation}
 Q(\econe{k+1}(1+ \varepsilon)) \subset
 V^+\cap Q(\reals_+, \reals^k) \subset
  Q(\econe{k+1}(1-\varepsilon)).
\end{equation}
\end{proposition}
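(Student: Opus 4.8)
The plan is to recast the two-sided cone inclusion as a single Dvoretzky-type statement about a gauge function and then deploy concentration of measure on the sphere. First I would introduce, for $\vec y\in\reals^n$, the function $N(\vec y)=\inf\{t : te+T\vec y\in V^+\}$, which is finite and real-valued because $e$ is an order unit (so $te+T\vec y\in V^+$ for $t$ large) while $V^+$ is pointed and closed (so $N>-\infty$). Since $\{N\le 1\}=\{\vec y : e+T\vec y\in V^+\}$ is convex and contains the origin, $N$ is a genuine sublinear gauge; in particular it is Lipschitz with respect to $\norm{\cdot}_2$ with constant $b\equiv\max_{\norm{\vec y}_2=1}N(\vec y)$. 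A one-line computation shows that for $t\ge 0$ one has $te+TX(\vec x)\in V^+$ exactly when $t\ge N(X(\vec x))$, so the asserted sandwich $Q(\econe{k+1}(1+\varepsilon))\subset V^+\cap Q(\reals_+,\reals^k)\subset Q(\econe{k+1}(1-\varepsilon))$ is equivalent to the two-sided estimate $(1-\varepsilon)\norm{\vec x}_2\le N(X(\vec x))\le(1+\varepsilon)\norm{\vec x}_2$ for all $\vec x\in\reals^k$. In other words, I want the restriction of $N$ to the random $k$-dimensional subspace $X(\reals^k)$ to agree with the Euclidean norm up to a factor $1\pm\varepsilon$. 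The possible asymmetry $N(-\vec y)\neq N(\vec y)$ is harmless, since the argument below will control $N$ over the whole sphere.

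Next I would fix the Euclidean structure, that is, the bijection $T$, so that the body $\{N\le 1\}$ sits in a favourable position (John's or $\ell$-position), and rescale so that the spherical mean $M\equiv\int_{S^{n-1}}N\,d\sigma$ equals $1$; the target Euclidean norm is then exactly $N$'s mean value. The heart of the matter is concentration: by L\'evy's lemma the $b$-Lipschitz function $N$ concentrates sharply about $M$ on $S^{n-1}$, with a bound of the form $\Prob\big(\abs{N(\theta)-M}>s\big)\le C\exp(-cns^2/b^2)$.

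To promote this pointwise estimate to uniform control on a random subspace I would run the standard net argument: choose a $\delta$-net of the unit sphere of $X(\reals^k)$ (of cardinality at most $(3/\delta)^k$), apply the tail bound together with a union bound at each net point, and then use the Lipschitz property to interpolate between net points. This succeeds, yielding the $(1\pm\varepsilon)$ sandwich with high probability, provided the net is not too large, i.e.\ provided $k\le c\,\varepsilon^2 (M/b)^2 n$; the product $(M/b)^2 n$ is precisely the Dvoretzky dimension of the body.

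The main obstacle, and the step that pins down the stated scaling $k\le\mathscr O(\varepsilon^2\log n)$, is the universal lower bound $(M/b)^2 n\ge c\log n$, valid for \emph{every} AOU space once $T$ is chosen well. I would obtain it along the classical route: put $\{N\le 1\}$ in John's position, extract by the Dvoretzky--Rogers lemma a large family of nearly orthonormal directions on which $N$ is bounded below by a constant, pass to a Gaussian average $M\sim n^{-1/2}\,\mathbb E\,N(g)$, and bound $\mathbb E\,N(g)\ge c\sqrt{\log n}$ using those directions. This is exactly where the worst case---an $\ell_\infty$-type (hypercube) cone---saturates the bound and forces the logarithm, so no universal guarantee better than $\log n$ can be expected. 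The remaining points I expect to be routine: verifying that $N$ is genuinely $b$-Lipschitz, that the centering by $e$ is compatible with the chosen position, and that the asymmetry of the gauge spoils neither the concentration nor the net estimate (it does not, as both are applied over the full sphere, hence to $\pm\vec x$ simultaneously).
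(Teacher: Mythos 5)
Your proposal is correct and follows essentially the same route as the paper: the same gauge function $\eta(\vec z)=\inf\{t>0 : te+T\vec z\in V^+\}$, the same Dvoretzky--Rogers positioning of $T$ combined with a Gaussian estimate to obtain the $\sqrt{n/\log n}$ Lipschitz bound (with the hypercube as the saturating case), and the same reduction of the two-sided cone inclusion to a $(1\pm\varepsilon)$ approximation of the gauge on a random subspace. The only difference is that you unfold the proof of the ``tangible'' Dvoretzky theorem (L\'evy concentration plus a net and union bound) where the paper invokes it as a black box.
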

\begin{proof}
At the heart of the proof is the following ``tangible'' version of Dvoretzky's
 theorem \cite{Aubrun:2011CMP, Milman:1986, Pisier:1989}:
If $\eta\colon S^{n-1}\rightarrow \reals$ is a Lipschitz function with constant
 $L$ and central value $1$ (with respect to the uniform spherical measure on
 $S^{n-1}$), then for every $\varepsilon>0$, if $E\subset
 \reals^n$ is a random subspace of dimension $k\le k_0= c_0
 \varepsilon^2\, n/L^2$, we have, that
\begin{equation}
 \Prob\left[ \sup_{S^{n-1}\cap E} \abs{\eta(\vec x)- 1} > \varepsilon \right]
   \le c_1 \eulere^{-c_2 k_0},
\end{equation}
 where $c_0$, $c_1$, and $c_2$ are absolute constants.

For our scenario, we use $\eta(\vec z) = \inf\{t> 0\mid t e + T\vec z\in
V^+\}$ with $T$ chosen such that $\eta$  has a mean (which is a particular
central value) of $1$ on the $(n-1)$-dimensional Euclidean sphere and that the 
Lipschitz constant $L$ of $\eta$ is bounded via $L\le c' \sqrt{n/\log n}$ for 
some absolute constant $c'$. This is always possible, as can be seen following 
the proof of Theorem {4.3} in \cite{Pisier:1989}: First, by a Lemma of 
Dvoretzky and Rogers \cite[Theorem~{3.4}]{Milman:1986}, the bijection $T$ can 
be chosen such that for all canonical vectors $\vec e_k$ with $k\le n/2$ it 
holds that $\norm{T\vec e_k}_+\ge \norm T/4$. Without loss of generality we may 
assume in addition that $\eta$ has mean $1$.
Then, for a vector of normal distributed variables $\vec g$ and due to 
 $\norm{T\vec z}_+=\max\{\eta(\vec z),\eta(-\vec z)\}$ and
 \cite[Eqns.~(4.14, 4.18)]{Pisier:1989} we find,
\begin{multline}
2\sqrt n \ge 2 \mathbb E \eta(\vec g)
 \ge \mathbb E \norm{T\vec g}_+
 \ge \mathbb E \max_k \abs{g_k}\, \norm{T\vec e_k}_+ \\
 \ge \mathbb E \max_{k\le n/2} \abs{g_k}\, \norm{T\vec e_k}_+
 \ge c'' \sqrt{\log(n/2)} \, \norm T/4.
\end{multline}
On the other hand, $\eta$ is a sublinear function and thus
\begin{multline}
\abs{\eta(\vec z_1)-\eta(\vec z_2)}
 \le \max\{\eta(\vec z_1-\vec z_2), \eta(\vec z_2-\vec z_1)\}\\
 =   \norm{T(\vec z_1-\vec z_2)}_+
 \le \norm T \, \norm{\vec z_1-\vec z_2}_2,
\end{multline}
 which eventually shows $L\le c'\sqrt{n/\log n}$.

Now, by virtue of Dvoretzky's theorem, the following holds with high
 probability.
For all $\vec x\ne 0$ with $\xi\equiv \norm{\vec x}_2\le 1/(1 +\varepsilon)$,
 we have $\eta[X(\vec x/\xi)]\le 1 + \varepsilon\le 1/\xi$, and hence
 $Q(1, \vec x)= [e/\xi + TX(\vec x/\xi)]\xi \in V^+$.
Conversely, for all $\vec x$ with $\xi\equiv \norm{\vec x}_2>1/(1
 -\varepsilon)$, we have  $\eta[X(\vec x/\xi)]\ge 1- \varepsilon>1/\xi$,
 i.e., $Q(1, \vec x) \notin V^+$. The converse statement completes the proof.
\end{proof}

\begin{figure}
\includegraphics[width=.9\columnwidth]{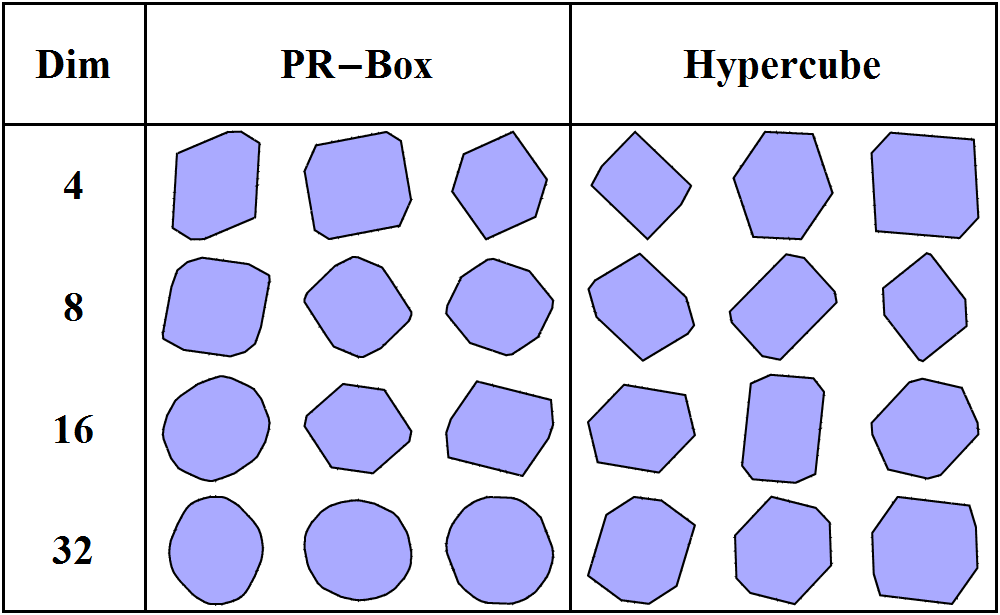}
\caption{Typical two-dimensional sections of a hypercube and of the effect space of a PR-Box
 in various dimensions. In both cases an increasing rounding of the corners of the sections can be observed.
 However in the case of a hypercube, which is the extremal situation for Dvoretzky's theorem, there is still an appreciable probability for non-rounded sections, due to low dimensionality.}
\label{fcut}
\end{figure}

Thus, with high accuracy, the effective theory corresponding to a
low-dimensional $\textsl{O}(n)$-typical section of a local GPT looks like a
Euclidean AOU space, cf.\ Fig.~\ref{fcut} for an illustration. The cones 
$Q(C_{k+1}^+(1\pm\varepsilon))$ give a very accurate description of the typical 
section, since by linearity all observable probabilities may at most deviate by 
$\mathscr
O(\varepsilon)$. Combining this with our previous finding, namely that 
Euclidean cones are sections of QM, and
hence, in view of Proposition~\ref{p14427}, all bipartite correlations of their
maximal tensor product may be simulated within QM, we arrive at our anticipated
main result.
Conversely, due to an argument by Tsirelson \cite{Tsirelson:1985ZNS}, all 
bipartite dichotomic correlations can be explained within an Euclidean cone of 
appropriate dimension.
Our result reduces to this dichotomic case, since already our description of a 
GPT by an AOU space is essentially limited to the dichotomic case.

Finally we briefly discuss the situation of a generalized Popescu-Rohrlich (PR)
 box, which exhibits (in some sense) the ``maximal'' possible post-quantum correlations \cite{Popescu:1994Found}.
Such boxes are locally described by an AOU vector space over $\reals^n$ with
 cone
 $\mathrm{PR}^+ = \{ (t,\vec x)\mid t\ge \sum_i \abs{x_i} \}$ and neutral
 element $(1,\vec 0)$.
By virtue of Proposition~\ref{p20790}, the fraction of 3-dimensional sections 
 from a $55\times10^6$-dimensional box with a post-quantum behavior of more 
 than $\pm 3\%$ is as low as $10^{-6}$\footnote{Those estimates stem from 
 estimating the constants $c_0$, $c_1$, and $c_2$ in the original proofs.  
 Numerically estimations yield much better bounds and the dimension can be 
 estimated to be $\approx 2000$ .}.

\emph{Conclusions.}---%
We have presented a mechanism whereby observable bipartite correlations of an arbitrary post-quantum theory could be, with high accuracy, compatible with those exhibited by  quantum mechanics. Our argument exploited the concentration of measure phenomenon and hence works for any typical low-dimensional section of a generalised probabilistic theory. We argued that such typical sections arise due to a lack of ultra-precise experimental control, in which case it would be virtually impossible to observe any post-quantum behavior, even if the fundamental theory of Nature wasn't quantum mechanics. This is complementary to the emergence of classicality from quantum mechanics via decoherence \cite{Navascues:2010RSA, Kurzynski:2011XXX}, since we consider only a pair of (microscopic) objects, rather than an ensemble of objects. Our argument indicates that there is another option for a refinement of today's physics: we might be missing hidden post-quantum structures due to an ignorance of the correct measurement directions.

\begin{acknowledgments}
We thank O.\ G\"uhne, A.\ Ahlbrecht, and C.\ Budroni for helpful discussions 
and the Centro de Ciencias de Benasque, where part of this work has been done, 
for its hospitality during the workshop on quantum information $2012$.
This work has been supported by
 the Austrian Science Fund (FWF): Y376-N16 (START prize),
 the BMBF (CHIST-ERA network QUASAR),
 the EU (Marie-Curie CIG 293933/ENFOQI, Coquit, QFTCMPS),
V.B.S. is supported by an ETH postdoctoral fellowship and the SNF through the 
 National Centre of Competence in Research Quantum Science and Technology.
This work was
supported, in part, by the cluster of excellence EXC 201 ''Quantum Engineering and Space-Time Research'',
by the Deutsche Forschungsgemeinschaft (DFG).
\end{acknowledgments}

\bibliography{small}

\end{document}